\newcounter{Theorem}
\newtheorem{theorem}{Theorem}[Theorem]
\newcounter{Definition}
\newtheorem{definition}{Definition}[Definition]
\newcounter{Assumption}
\newtheorem{assumption}{Assumption}[Assumption]
\newcounter{Problem}
\newtheorem{problem}{Problem}[Problem]
\newcounter{Remark}
\newtheorem{remark}{Remark}[Remark]
\DeclareMathOperator{\He}{He}
\DeclareMathOperator{\maximize}{maximize}
\DeclareMathOperator{\dom}{dom}
\DeclareMathOperator{\R}{\mathbb{R}}
\DeclareMathOperator{\N}{\mathbb{N}}
\newcommand{\EndSym}{\hfill $\bigtriangleup$}
\newcommand{\source}{{This is an archival version of our paper. Please cite the published version DOI:  \href{https://doi.org/10.1109/LCSYS.2019.2911882}{https://doi.org/10.1109/LCSYS.2019.2911882}}}
\def\ps@IEEEtitlepagestyle{}
\title{%\LARGE \bf
On DoS Resiliency Analysis of Networked Control Systems: Trade-off Between Jamming Actions and Network Delays}
\author{Roberto~Merco$^{\dagger}$,~\IEEEmembership{Student Member,~IEEE,}
        Francesco~Ferrante$^{\ddagger}$,~\IEEEmembership{Member,~IEEE,}
        and~Pierluigi~Pisu$^{\dagger}$,~\IEEEmembership{Member,~IEEE}
\thanks{$^{\dagger}$Automotive Department of Clemson University, Greenville SC 29607 USA. Email: rmerco@clemson.edu, pisup@clemson.edu.}%
\thanks{$^{\ddagger}$Univ. Grenoble Alpes, CNRS, GIPSA-lab, F- 38000 Grenoble, France. Email: francesco.ferrante@gipsa-lab.fr}
\thanks{This material is based upon work supported by the National Science Foundation (NSF) under grant No. CNS-1544910. Any opinions, findings and conclusions or recommendations expressed in this material are those of the authors and do not necessarily reflect the views of the National Science Foundation.}
}
\begin{document}

\maketitle

%%%%%%%%%%%%%%%%%%%%%%%%%%%%%%%%%%%%%%%%%%%%%%%%%%%%%%%%%%%%%%%%%%%%%%%%%%%%%%%%
\begin{abstract}
This letter deals with the problem of quantifying resiliency of Networked Control Systems (NCSs) to Denial-of-Service (DoS) attacks and variable network delays. Internal exponential stability and $\mathcal{L}_2$ external stability are studied. The closed-loop system is augmented with an auxiliary timer variable and analyzed in a hybrid system framework. Lyapunov-like conditions are given to ensure $0$-input global exponential stability and $\mathcal{L}_2$ external stability. A computationally affordable algorithm based on linear matrix inequalities is devised to provide trade-off curves between maximum length of DoS attacks and largest network delays. Finally, the effectiveness of the proposed approach is shown in a numerical example.
\end{abstract}

\begin{IEEEkeywords}
Networked control systems, Hybrid systems, LMIs.
\end{IEEEkeywords}

%%%%%%%%%%%%%%%%%%%%%%%%%%%%%%%%%%%%%%%%%%%%%%%%%%%%%%%%%%%%%%%%%%%%%%%%%%%%%%%%
\section{INTRODUCTION}
\IEEEPARstart{I}{n} Networked Control Systems (NCSs) controller, sensors, and actuators can be partially or entirely distributed and connected through a wired or wireless communication network. NCSs have a vast range of applications in mobile robots, intelligent transportation systems, and remote surgery, just to mention a few. Network imperfections, such as sampling, network delays, and packet dropping, affect performance and stability of NCSs and are in general unavoidable due to limited bandwidth, network traffic, and transmission protocols. Due to their relevance in applications, NCSs have seen an increasing interest in the community; see, e.g., \cite{hespanha2007survey,zhang2013network,
%xia2015recent,
zhang2016survey}. Beside network imperfections, cyber attacks exploit the network to deteriorate the performance or induce instability of NCSs \cite{mo2014detecting,teixeira2015secure}. Among cyber attacks, Denial-of-Service (DoS) is the easiest to accomplish and impacts stability and performance of NCSs \cite{yuan2013resilient,amin2009safe}. DoS attacks can induce packet losses and can be easily performed by jamming strategies \cite{poisel2011modern} with the objective of blocking the transmission of information between nodes \cite{teixeira2012attack}. 

Packet losses are often treated with either stochastic or deterministic models. While the former is more suitable to capture natural packet dropping phenomena, the latter performs better in case of packet dropping due to security reasons \cite{amin2009safe}. In fact, a stochastic characterization of packet dropouts generated by cyber attacks would limit the ability of NCSs to capture the malicious and intelligent nature of DoS strategies \cite{dolk2017event}. Furthermore, duration and frequency of DoS attacks are often considered limited over time, since attackers deal with detection avoidance, implementation simplicity, limited resources \cite{amin2009safe,hu2018resilient}, and mitigation techniques \cite{xu2006jamming, debruhl2011digital}. A recent comparison of deterministic packet dropping models can be found in \cite{de2018comparison}.

In the literature, various contributions deal with the design of NCSs by relying on deterministic models of DoS attacks. To cite few recent works, De Persis et al. \cite{de2015input} propose a framework to explicitly characterize the frequency and the duration of DoS attacks under which stability of the closed-loop system is preserved. Dolk et al. \cite{dolk2017event} build upon \cite{de2015input} by considering control systems with output feedback controllers. Feng et al. \cite{feng2017resilient} propose a control scheme equipped with prediction capabilities to reconstruct the missing measurements during DoS attacks and take into account network delays. However, to the best of authors knowledge, network delays are not commonly taken into account when studying the effect of DoS attacks.

In this letter, we analyze the resiliency of NCSs to DoS attacks by using an emulation approach 
\cite{carnevale2007lyapunov}. 
More specifically, we assume that a dynamic output feedback controller has been already designed to guarantee stability of the ``networked-free" closed-loop system and we analyze the impact of its networked implementation. We employ a deterministic model of the packet dropping induced by DoS attacks. In this setting, we provide estimates on the worst-case bounds of the maximum allowable number of successive packet dropouts (MANSD) and network delays (MAD) under which stability is preserved. As in \cite{feng2017resilient}, we consider NCSs where only the sensing path of the closed-loop system is subject to network communication.

Our contribution is as follows. We propose sufficient conditions in the form of matrix inequalities to get explicit bounds for MANSD and MAD that a given NCS can tolerate to maintain closed-loop stability. The approach we pursue relies on Lyapunov theory for hybrid systems in the framework of \cite{goebel2012hybrid}. Specifically, sufficient conditions for global zero-input exponential stability and $\mathcal{L}_2$ external stability are given.
One of the unique features of our approach is that the parameters of interest in the problem, i.e., MANSD and MAD, appear explicitly in the resulting conditions. This enables the derivation of a computationally affordable algorithm for the approximation of the trade-off curve between MANSD and MAD based on semidefinite programming tools. 

The remainder of the letter is organized as follows. Section II presents the modeling of the considered NCS and states the problem we solve. Section III provides sufficient conditions for the solution to the considered problem. Section IV describes an LMI-based algorithm for the estimation of the trade-off curves between MANSD and MAD. The effectiveness of the proposed methodology is shown in Section V through a numerical example. Finally, we provide concluding remarks in Section VI.

\subsubsection*{\textbf{Notation}}
The set $\mathbb{N}$ denotes the set of strictly positive integers, $\mathbb{N}_0=\mathbb{N}\cup\{0\}$, 
$\mathbb{R}^n$ represents a vector of dimension $n$, $\mathbb{R}^{n\times m}$ is the set of $n \times m$ real matrices, and $S_+^n$ represents the set of $n \times n$ symmetric positive definite matrices. 
The identity matrix and the null matrix are denoted, respectively, by $\textbf{I}$ and $\textbf{0}$. Give any $A\in \mathbb{R}^{n \times m}$, $A^\top$ denotes the transpose of $A$, $A^{-\top}=(A^\top)^{-1}$ (when $A$ is nonsingular), $\textrm{He}(A)=A+A^\top$. 
For a symmetric matrix $A$, $A>0$ and $A \geq 0$ ($A<0$ and $A \leq 0$) means that $A$ $(-A)$ is, respectively, positive definite and positive semidefinite, $\lambda_{\min}$ and $\lambda_{\max}$ denote respectively the smallest and the largest eigenvalue of $A$. In partitioned symmetric matrices, the symbol $\bullet$ represents a symmetric block. For a vector $x \in \mathbb{R}^n$, $\vert x \vert$ denotes the Euclidean norm. Given two vectors $x$ and $y$, we denote $(x,y)=[x^\top,y^\top]^\top$. Given a vector $x \in \mathbb{R}^n$ and a closed set $\mathcal{A}$, the distance of $x$ to $\mathcal{A}$ is defined as $\vert x \vert_\mathcal{A}=\textrm{inf}_{y\in\mathcal{A}} \vert x-y \vert$. For any function $z: \mathbb{R} \rightarrow \mathbb{R}^n$, we denote $z(t^+):=\textrm{lim}_{s\rightarrow t^+} z(s)$ when it exists. By $\vee$ and $\wedge$ we denote, respectively, the logical ``or'' and ``and''. A function $\alpha\colon\mathbb{R}_{\geq 0} \rightarrow \mathbb{R}_{\geq 0}$ is said to be of class $\mathcal{K}$ if it is continuous, $\alpha(0)=0$, and it is strictly increasing.

\section{Problem Statement}
\subsection{Description}
We consider NCSs, depicted in Fig. \ref{fig:controlScheme}, where a plant $\mathcal{P}$ is stabilized by a dynamic controller $\mathcal{K}$ that relies on measurements collected through a packet-based network subject to DoS attacks and variable network delays. The presence of the network results into an intermittent stream of information
from the plant to the controller, which does not have access to the plant output in a continuous-time fashion. To overcome this problem, we assume that the controller is equipped with a Zero Order Hold (ZOH) device that stores the last measurement received from the plant and holds it constant until new measurement data are available.

\begin{figure}[thpb]
\centering
\includegraphics[scale=0.7]{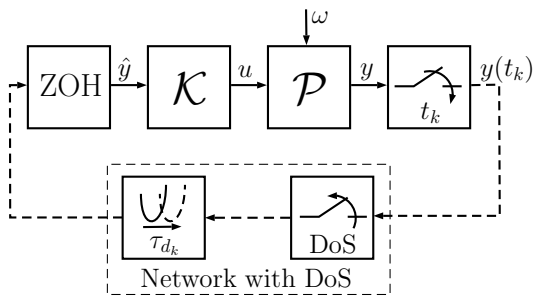}
\caption{Schematic representation of the NCS considered in this letter.}
\label{fig:controlScheme}
\end{figure}

In this letter, we assume the plant is a linear time-invariant continuous-time system of the form:
\begin{equation} \label{eq:plant}
\mathcal{P}: \quad \dot{x}_p = A_p x_p + B_p u + W \omega, \quad y = C_p x_p
\end{equation}
where $x_p \in \mathbb{R}^{n_{x_p}}$ represents the state of the plant, $u \in \mathbb{R}^{n_u}$ represents the control input, $\omega \in \mathbb{R}^{n_\omega}$  is an exogenous disturbance, and $y \in \mathbb{R}^{n_{y}}$ is the output of the plant. The constant matrices $A_p$, $B_p$, $W$, and $C_p$ are given and of appropriate dimensions. We consider a setup in which the measurement $y$ is sampled and transmitted to the controller periodically with a period $T_s$. Specifically, we suppose that the output of the plant is sampled and transmitted at certain time instants $t_k, \, k\in \N_0$ with $t_{k+1}-t_k=T_s$, $t_0=0$.

Inspired by \cite{dolk2017event,feng2017resilient}, we consider a DoS attack as a limited time interval in which a malicious jamming attacker blocks the communication channel. Therefore, DoS attacks can be seen as a sequence of intervals $\{H_n\}_{n\in\mathbb{N}}$ where transmissions of the output $y$ fail to reach the controller for a limited amount of time. Specifically, we assume that the $n$-th DoS attack generates $\iota_n \in \{0,1, \ldots,\Delta\}, \forall n\in\mathbb{N}$, successive packet dropouts, where $\Delta \in \N_0$ is the \textit{maximum allowable number of successive packet dropouts} (MANSD). Furthermore, we assume that intervals $\{H_n\}_{n\in\mathbb{N}}$ do not overlap to each other, and that there is at least one successful transmission in between them. Fig. \ref{fig:evolutionDoS} depicts a graphical illustration of a possible admissible sequence $\{H_n\}_{n\in\mathbb{N}}$. 

\begin{figure}[thpb]
\centering
\includegraphics[scale=0.78]{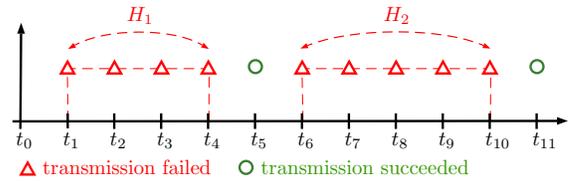}
\caption{Illustration of the evolution of packet dropping due to DoS attacks.}
\label{fig:evolutionDoS}
\end{figure}

\begin{remark}
It is worth mentioning that in \cite{de2018comparison} one can find less conservative models for DoS attacks. In this letter, with the objective of reducing the complexity of the model of the closed-loop system, we consider instead DoS attacks characterized by MANSD.
\end{remark}

 Concerning network delays, in this letter we employ the \textit{small delay assumption}  \cite{heemels2010networked}. In particular, when transmissions are successful, $y$ is received by the controller after a bounded, possibly time varying, network delay $\tau_{d_k}$. We assume that $0 \leq \tau_{d_k} \leq T_{mad}, k \in \N_0$, where $0\leq T_{mad} \leq T_s$ is the \textit{maximum allowable delay} (MAD). 
 
\begin{remark}
The latter condition implies that the transmitted output measurement must be received by the controller before the next measurement is sampled and sent. A similar assumption is considered also in \cite{heemels2010networked,feng2017resilient} just to cite a few. This  assumption prevents from packet disorder phenomena and allows our model to capture packet dropouts and network delays in a simple and unified fashion.
\end{remark}

The dynamics of the ZOH can be modeled as a system with jumps in its state. In particular, let $\hat{y}\in\R^{n_y}$ be the state of the ZOH. Its dynamics can be given as follows for all $k \in \N_0$:
$$
\scalebox{0.9}{
$
\text{ZOH}:  \left\{\begin{array}{ll}
\dot{\hat{y}}(t) = 0 & \forall t \not= t_k+\tau_{d_k} \vee t_k \in \bigcup_{n\in\mathbb{N}} H_n\\
\hat{y}(t^+) = y(t_k) &  \forall t = t_k+\tau_{d_k} \wedge t_k \notin \bigcup_{n\in\mathbb{N}} H_n \\
\end{array} \right.
$}
$$
Notice that $\hat{y}$ is kept constant and it is set to $y(t_k)$ after a network delay $\tau_{d_k}$ only when transmissions are successful. 

Because of the ZOH device, the controller $\mathcal{K}$ is fed with the piecewise constant signal $\hat{y}$, and its continuous-time dynamics are given by:
\begin{equation} \label{eq:controller}
\mathcal{K} : \quad \dot{x}_c = A_c x_c + B_c \hat{y}, \quad u = C_c x_c + D_c \hat{y}
\end{equation}
where $x_c \in \mathbb{R}^{n_{x_c}}$ is the controller state, and $A_c$, $B_c$, $C_c$, and $D_c$ are given constant matrices of appropriate dimensions.

\subsection{Problem Statement}

Given a performance output $y_o := C_o \left( x_p, x_c \right)$ with $y_o \in \mathbb{R}^{n_{y_o}}$, the problem we solve is as follows:

\begin{problem} \label{prob:stability}
Given plant $\mathcal{P}$, controller $\mathcal{K}$, and sampling time $T_s$, determine the largest achievable values of $T_{mad}$ and $\Delta$ such that the interconnection of plant \eqref{eq:plant} and controller \eqref{eq:controller} satisfies the following properties:
\begin{enumerate}[({P}1)]
\item global exponential stability when the input $\omega$ is identically zero;
\item when the disturbance $\omega$ is bounded, the closed-loop state is bounded;
\item $\mathcal{L}_2$ stability from the disturbance $\omega$ to the performance output $y_o$ is ensured with a desired $\mathcal{L}_2$-gain $\gamma$. \EndSym
\end{enumerate} 
\end{problem}

\subsection{Hybrid Modeling}
The closed-loop system in Fig. \ref{fig:controlScheme} can be modeled as a linear system with jumps in $\hat{y}$. In particular, for all $k \in \N_0$ one obtains
$$
\scalebox{0.9}{
$
\begin{array}{ll}
\left\{ 
\begin{aligned}
& \dot{x}_p = A_p x_p \! + \! B_p C_c x_c \! + \! B_p D_c \hat{y} \! + \! W \omega \\
& \dot{x}_c = A_c x_c \!+\! B_c \hat{y}\\
& \dot{\hat{y}} = 0
\end{aligned}\right. & 
\begin{aligned}
& \forall t \not= t_k+\tau_{d_k} \vee \\
& t_k \in \bigcup_{n\in\mathbb{N}} H_n  \\
\end{aligned}\\[0.7cm]
\left\{ 
\begin{aligned}
& x_p(t^+) = x_p(t)\\
& x_c(t^+) = x_c(t)\\
& \hat{y}(t^+) = C_p x_p(t_k)
\end{aligned}\right. & 
\begin{aligned}
& \forall t = t_k+\tau_{d_k} \wedge \\
& t_k \notin \bigcup_{n\in\mathbb{N}} H_n  \\
\end{aligned}\\
\end{array}    
$}
$$
where, for the sake of notation, dependence on time is omitted in continuous-time dynamics. The closed-loop system evolves with differential equations and experiences jumps. For such a reason, we model it into the hybrid system framework in  \cite{goebel2012hybrid}. To this end, we introduce the auxiliary variables $\tau \in \mathbb{R}_{\geq 0}$, $s_y \in \mathbb{R}^{n_y}$, and $l\in\{0,1\}$. Variable $\tau$ is a timer that keeps track of the duration of sampling intervals and network delays, and  triggers a jump whenever a new measure that will be successfully received by the controller is sampled (sampling events) or received (update events). Variable $s_y$ represents a memory state that stores the value of the sampled measurement $y(t_k)$. In particular, $y(t_k)$ is stored in $s_y$ at sampling events of successful transmissions, and $s_y$ is assigned to $\hat{y}$ at update events after network delays $\tau_{d_k}$. Similarly as in \cite{dolk2017event}, variable $l$ allows one to model both sampling events (when $l=0$) and updating events (when $l=1$). In particular, we consider the following hybrid model of the closed-loop system
\begin{equation}
\scalebox{0.9}{
$
\begin{aligned} \label{eq:firstModel}
\left\lbrace
\begin{array}{ll}
\dot{\xi} =f_{\xi}(\xi,\omega) & \xi \in C_{\xi}, \omega \in \mathbb{R}^{n_\omega} \\
\xi^+ = g_{\xi}(\xi) & \xi \in D_{\xi} \\
y_o = C_o \left( x_p, x_c \right)
\end{array}
\right. 
\end{aligned}
$}
\end{equation}
where $\xi \coloneqq  (x_p,x_c,\hat{y},s_y,\tau,l) \in \mathbb{R}^{n_{\xi}}$ with $n_{\xi}:=n_{x_p}+n_{x_c}+2n_y+2$ is the state of the hybrid system, $f_{\xi}(\xi,\omega) \! \coloneqq \! ( A_p x_p \! + \! B_p C_c x_c \! + \! B_p D_c \hat{y} \! + \! W \omega, A_c x_c \!+\! B_c \hat{y},0,0,1,0)$ and $g_{\xi}(\xi) \! \coloneqq \! (x_p, x_c, (1\!-\!l) \hat{y} \! + \! l s_y, (1\!-\!l) C_p x_p \!+\! l s_y, l\tau,1\!-\!l)$. 
The flow set $C_{\xi}$ and the jump set $D_{\xi}$ are respectively defined by  $C_{\xi} \coloneqq  \{ \xi \in \mathbb{R}^{n_{\xi}} | (l = 0 \wedge \tau \in [0,(\Delta+1)T_s]) \vee (l = 1 \wedge \tau \in [0,T_{mad}]) \}$ and $D_{\xi} \coloneqq \{ \xi \in \mathbb{R}^{n_{\xi}} | (l = 0 \wedge \tau \in T_s \Theta_\Delta) \vee (l = 1 \wedge \tau \in [0,T_{mad}]) \}$
where $\Theta_\Delta \coloneqq \{1,2, \ldots,\Delta+1 \}$. 

\begin{remark}
It is worth noticing that the model in \eqref{eq:firstModel} considers only sampling events for successful transmissions. Those occur periodically with a period that is multiple of $T_s$. In particular, sampling events are triggered for $\tau=T_s$ when no DoS occurs, whereas they are triggered for $\tau\in\iota T_s$, with $\iota:=\{2, \ldots,\Delta+1 \}$, for DoS attacks generating a number of consecutive packet dropouts within $1$ and $\Delta$. This feature is modeled by the condition $(l = 0 \wedge \tau \in T_s \Theta_\Delta)$ in the definition of $D_{\xi}$. Notice that, by definition, $C_\xi$ and $D_\xi$ overlap each other, and, when the state $\xi$ belongs to $C_\xi \cap D_\xi$, both flowing and jumping are allowed. As such, solutions to \eqref{eq:firstModel} are not unique. This enables one to capture all possible network behaviors in a unified manner.
\end{remark}

At this stage, to simplify the analysis, we introduce the following change of coordinates 
\begin{equation} \label{eq:changeCoordinates}
\eta\coloneqq \hat{y}-y, \, \sigma\coloneqq s_y - y 
\end{equation}

Using \eqref{eq:changeCoordinates}, one can obtain, by straightforward calculation, the closed-loop hybrid system in the new coordinates which reads as follows:
\begin{equation} \label{eq:systemHy}
\scalebox{0.9}{
$
\mathcal{H}_y
\begin{aligned}
\left\lbrace
\begin{array}{ll}
\dot{x} = f(x,\omega) & x \in C, \omega \in \mathbb{R}^{n_\omega} \\
x^+  = g(x) & x \in D \\
y_o = C_o \tilde{x}
\end{array}
\right. 
\end{aligned}
$}
\end{equation}
where $x\coloneqq(\tilde{x},\eta,\sigma,\tau,l) \in \mathbb{R}^{n_x}$ is the state with $n_x \coloneqq n_{\tilde{x}}+2n_y+2$, and $\tilde{x}\coloneqq (x_p,x_c)$. The flow map is given by
\begin{equation}
\scalebox{0.9}{
$
\begin{aligned} \label{eq:flowMapHy}
f(x,\omega) \coloneqq & ( A_{xx} \tilde{x} + A_{x\eta} \eta +  A_{x\omega} \omega , A_{\eta x} \tilde{x} + A_{\eta \eta} \eta + A_{\eta \omega} \omega , \\
 & A_{\eta x} \tilde{x} + A_{\eta \eta} \eta + A_{\eta \omega} \omega,1,0) \quad \forall x \in C, \omega \in \mathbb{R}^{n_\omega}
\end{aligned}
$}
\end{equation}
where $A_{\eta x} \!\! \coloneqq \!\! \left[ -C_p \,\, \textbf{0} \right] A_{xx}$,  $A_{\eta \eta} \!\! \coloneqq \!\! - C_p B_p D_c$,  $A_{\eta \omega} \! \coloneqq \!  - C_p W$, and
$$
\scalebox{0.9}{
$
A_{xx}\! \coloneqq \!\left[ \!\!\! \begin{array}{cc}
A_p\!\!+\!\!B_p D_c C_p \!\!&\!\!B_p C_c \\
B_c C_p \!\!&\!\! A_c
\end{array} \!\!\! \right]\!\!, \, \! A_{x\eta} \! \coloneqq \! \left[ \!\!\! \begin{array}{c}
B_p D_c \\
B_c
\end{array} \!\!\! \right] \!\!, \, \! A_{x\omega} \! \coloneqq \! \left[ \!\!\! \begin{array}{c}
W \\ 
\textbf{0}
\end{array} \!\!\! \right]\\
$}
$$
derive from \eqref{eq:plant}, \eqref{eq:controller} and \eqref{eq:changeCoordinates}. The jump map is defined for all $x \in D$ by $g(x) \coloneqq (\tilde{x}, l \sigma + (1-l) \eta,l\sigma, l\tau,1-l)$.
%\begin{equation*}
%g(x) \coloneqq (\tilde{x}, l \sigma + (1-l) \eta,l\sigma, l\tau,1-l) \qquad \forall x \in D
%\end{equation*}
The flow set $C$ and the jump set $D$ are respectively defined as $C \coloneqq  \{ x \in \mathbb{R}^{n_x} | (l = 0 \wedge \tau \in [0,(\Delta+1)T_s]) \vee (l = 1 \wedge \tau \in [0,T_{mad}]) \}$ and $D \coloneqq \{ x \in \mathbb{R}^{n_x} | (l = 0 \wedge  \tau \in T_s \Theta_\Delta) \vee (l = 1 \wedge \tau \in [0,T_{mad}]) \}$.

Concerning the existence of Zeno solutions, by analyzing the hybrid domain of solutions to $\mathcal{H}_y$, one can conclude that any maximal solution $\phi$ to $\mathcal{H}_y$ can experience at most three consecutive jumps without flowing in an interval of length smaller than $T_s$. In particular, for every $(t, j) \in \text{dom}\, \phi$, one has $j \leq \frac{t}{T_s}+3$. The latter property is commonly denoted as average dwell time \cite[Example 2.15]{goebel2012hybrid}, and
rules out the existence of Zeno solutions.

\section{Main Results}  \label{sec:stability} 
To solve Problem \ref{prob:stability}, our approach consists of deriving sufficient conditions in the form of matrix inequalities ensuring that the following set 
\begin{equation} \label{eq:setA}
\scalebox{0.9}{
$
\mathcal{A} \coloneqq \{0\} \times \{0\} \times \{0\} \times [0,(\Delta+1)T_s] \times \{0,1\}
$}
\end{equation}
is exponentially stable whenever $\omega \equiv 0$ and, in case of nonzero disturbance $\omega$, the hybrid system $\mathcal{H}_y$ in \eqref{eq:systemHy} is input-to-state stable with respect to $\mathcal{A}$. 

The following definition will be considered in the paper:

\begin{definition}[Exponential input-to-state stability]
\label{defISS}
Let $\mathcal{A}\subset\R^{n_x}$ be closed. System \eqref{eq:systemHy} is \emph{exponentially input-to-state-stable} (eISS) with respect to $\mathcal{A}$ if there exist $\kappa, \lambda>0$, and $p\in\mathcal{K}$ such that each maximal solution pair\footnote{A pair $(\phi,\omega)$ is a solution pair to $\mathcal{H}_y$ if it satisfies its dynamics; see \cite{cai2009characterizations} for more details.} $(\phi,\omega)$ to \eqref{eq:systemHy} is complete, and, if $\Vert \omega \Vert_\infty$ is finite, it satisfies
\begin{equation}
\label{eq:defeISS}
\scalebox{0.9}{
$
\vert \phi(t,j)\vert_{\mathcal{A}}\leq \max\{\kappa e^{-\lambda (t+j)}\vert \phi(0,0)\vert_{\mathcal{A}}, p(\Vert \omega\Vert_{\infty})\}
$}
\end{equation}
for each $(t,j)\in\dom\phi$, where $\Vert \omega \Vert_\infty$ denotes the $\mathcal{L}_\infty$ norm of the hybrid signal $\omega$ as defined in \cite{nevsic2013finite}.
\EndSym 
\end{definition}

At this stage, consider the following assumption. 

\begin{assumption} \label{ass:lyapunov}
Let $\gamma$ be a given positive real number. There exist three continuously differentiable functions
$V_1 : \mathbb{R}^{n_{\tilde{x}}} \rightarrow  \mathbb{R}$, $V_2 : \mathbb{R}^{n_y+2} \rightarrow  \mathbb{R}$, $V_3 : \mathbb{R}^{n_y+2} \rightarrow  \mathbb{R}$ and positive real numbers $\alpha_1$, $\alpha_2$, $\beta_1$, $\beta_2$, $\theta_1$, $\theta_2$, and $\lambda_t$,  such that
\begin{enumerate}[({A}1)]
\item $\alpha_1 |\tilde{x}|^2 \leq V_1(\tilde{x}) \leq \alpha_2 |\tilde{x}|^2, \quad \forall x \in C$ \label{ass:lyapunov:A1}
\item $\beta_1 |\eta|^2 \leq V_2(\eta,\tau,l) \leq \beta_2 |\eta|^2, \quad \forall x \in C$
\label{ass:lyapunov:A2}
\item $\theta_1 |\sigma|^2 \leq V_3(\sigma,\tau,l) \leq \theta_2 |\sigma|^2, \quad \forall x \in C$
\label{ass:lyapunov:A3}
\item $V_2(\eta,0,1) + V_3(0,0,1) \leq V_2(\eta,\tau,0) + V_3(\sigma,\tau,0)$, $\forall \eta \in \R^{n_y}, \, \sigma \in \R^{n_y}, \, \tau \in T_s \Theta_\Delta$ \label{ass1:jump1}
\item $V_2(\sigma,\tau,0) + V_3(\sigma,\tau,0) \leq V_2(\eta,\tau,1) + V_3(\sigma,\tau,1)$, $\forall \eta \in \R^{n_y}, \, \sigma \in \R^{n_y},\, \tau \in [0,T_{mad}]$ \label{ass1:jump2}
\item the function $x \mapsto V(x) := V_1(\tilde{x})+V_2(\eta,\tau,l)+V_3(\sigma,\tau,l)$ satisfies  $\langle \nabla V(x), f(x,\omega) \rangle \leq -2\lambda_t V(x) - \tilde{x}^\top C_o^\top C_o \tilde{x} + \gamma^2 \omega^\top \omega$ for each $x \in C, \omega \in \mathbb{R}^{n_\omega}$ \label{ass1:Vdot} \EndSym
\end{enumerate} 
\end{assumption}

The result given next provides sufficient conditions for the solution to Problem~\ref{prob:stability}.

\begin{theorem} \label{th:suffCond}
Let Assumption~\ref{ass:lyapunov} hold. Then:
\begin{enumerate}[($i$)]
\item The hybrid system $\mathcal{H}_y$ is eISS with respect to $\mathcal{A}$;
\item There exists $\alpha>0$ such that any solution pair $(\phi,\omega)$ to $\mathcal{H}_y$ satisfies
$$\scalebox{0.9}{$
\!\!\!\!\sqrt{\!\int_\mathcal{I}\!\!|y_o(r,j(r))|^2 dr}\!\leq\!\alpha |\phi(0,0)|_\mathcal{A} + \gamma \sqrt{\!\int_\mathcal{I}\!\!|\omega(r,j(r))|^2 dr}$}$$
where $\mathcal{I}\coloneqq[0,\sup_t \dom\phi] \cap \dom_t \phi$.
\end{enumerate}
\end{theorem}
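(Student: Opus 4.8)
The plan is to use the composite function $V(x)=V_1(\tilde{x})+V_2(\eta,\tau,l)+V_3(\sigma,\tau,l)$ appearing in (A6) as an ISS-Lyapunov function for $\mathcal{H}_y$ relative to $\mathcal{A}$, and to integrate the dissipation inequality in (A6) separately for the $\mathcal{L}_2$ bound. The first step is to sandwich the distance to $\mathcal{A}$. Since $T_{mad}\le T_s\le(\Delta+1)T_s$, for every $x\in C\cup D$ the timer and logic components already lie in the ranges that define $\mathcal{A}$ in \eqref{eq:setA}, so $|x|_\mathcal{A}^2=|\tilde{x}|^2+|\eta|^2+|\sigma|^2$. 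Adding (A1)--(A3) then yields $\underline{c}\,|x|_\mathcal{A}^2\le V(x)\le\bar{c}\,|x|_\mathcal{A}^2$ with $\underline{c}=\min\{\alpha_1,\beta_1,\theta_1\}$ and $\bar{c}=\max\{\alpha_2,\beta_2,\theta_2\}$.

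Next I would record the flow and jump behaviour of $V$. Along flows, (A6) gives $\langle\nabla V(x),f(x,\omega)\rangle\le-2\lambda_t V(x)-|y_o|^2+\gamma^2|\omega|^2$, and dropping the nonpositive output term yields $\langle\nabla V,f\rangle\le-2\lambda_t V+\gamma^2|\omega|^2$. For jumps I would evaluate $g$ in the two logic modes. When $l=0$ one has $g(x)=(\tilde{x},\eta,0,0,1)$, so $V(g(x))-V(x)=V_2(\eta,0,1)+V_3(0,0,1)-V_2(\eta,\tau,0)-V_3(\sigma,\tau,0)\le0$ by (A4) on $\tau\in T_s\Theta_\Delta$; when $l=1$ one has $g(x)=(\tilde{x},\sigma,\sigma,\tau,0)$, so $V(g(x))-V(x)=V_2(\sigma,\tau,0)+V_3(\sigma,\tau,0)-V_2(\eta,\tau,1)-V_3(\sigma,\tau,1)\le0$ by (A5) on $\tau\in[0,T_{mad}]$. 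Since $l=0$ jumps occur exactly on $\tau\in T_s\Theta_\Delta$ and $l=1$ jumps on $\tau\in[0,T_{mad}]$, we obtain $V(g(x))\le V(x)$ for all $x\in D$.

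For part ($i$), completeness of every maximal solution pair follows because $f$ is affine in $x$, so no finite escape time occurs; every point of $C\cup D$ admits either flow or a jump and $g(D)\subset C\cup D$, so solutions remain viable; and the average dwell-time bound $j\le t/T_s+3$ noted after \eqref{eq:systemHy} rules out Zeno behaviour. For the estimate, applying the comparison lemma to $\dot{V}\le-2\lambda_t V+\gamma^2\|\omega\|_\infty^2$ over each flow interval and using $V(g(x))\le V(x)$ at jumps gives $V(\phi(t,j))\le e^{-2\lambda_t t}V(\phi(0,0))+\frac{\gamma^2}{2\lambda_t}\|\omega\|_\infty^2$. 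The crucial step is converting this pure $t$-decay into $(t+j)$-decay: because the jump map only guarantees non-increase, the contraction in the jump counter must be borrowed from the flow via the dwell-time bound. Using $t\ge\frac{T_s}{T_s+1}(t+j-3)$, I would absorb the resulting constant into $\kappa$, set $\lambda=\frac{\lambda_t T_s}{T_s+1}$, and pass through the sandwich bounds together with $\sqrt{a+b}\le\sqrt{a}+\sqrt{b}$ and $a+b\le2\max\{a,b\}$ to reach the max-form estimate of Definition~\ref{defISS}, with $p$ a linear (hence class $\mathcal{K}$) function of gain proportional to $\gamma/\sqrt{\lambda_t\underline{c}}$.

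For part ($ii$) I would instead retain the output term. Along flows (A6) gives $\dot{V}\le-|y_o|^2+\gamma^2|\omega|^2$ after dropping $-2\lambda_t V\le0$, while $V$ does not increase at jumps. Telescoping the flow integrals over the hybrid time domain and cancelling the jump contributions yields $V(\phi(t,j))-V(\phi(0,0))\le-\int_\mathcal{I}|y_o|^2\,dr+\gamma^2\int_\mathcal{I}|\omega|^2\,dr$; using $V\ge0$ and $V(\phi(0,0))\le\bar{c}|\phi(0,0)|_\mathcal{A}^2$ and taking square roots gives the claim with $\alpha=\sqrt{\bar{c}}$. I expect the main obstacle to be the $(t+j)$ conversion in part ($i$): since (A4)--(A5) only provide non-increase of $V$ at jumps, all exponential decay in $j$ has to be extracted from the average dwell-time property, and the disturbance contribution must be separated carefully so that the transient term and the steady-state gain $p(\|\omega\|_\infty)$ emerge in the required max form.
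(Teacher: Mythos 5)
Your proposal is correct and follows essentially the same route as the paper's proof: the same composite Lyapunov function $V=V_1+V_2+V_3$ with sandwich bounds $\rho_1|x|_\mathcal{A}^2\le V(x)\le\rho_2|x|_\mathcal{A}^2$ obtained from (A1)--(A3), flow dissipation from (A6) with the output term dropped, jump non-increase of $V$ from (A4)--(A5), and integration of the full dissipation inequality (keeping the output term) for the $\mathcal{L}_2$ estimate. You are in fact slightly more careful than the paper at two points it glosses over by citing prior work: the explicit dwell-time conversion of the $e^{-\lambda_t t}$ decay into the $(t+j)$-decay required by Definition~\ref{defISS} (the paper's stated $\lambda=\lambda_t$ silently skips this), and the constant in item ($ii$), where your $\alpha=\sqrt{\rho_2}$ is the one that actually follows from taking square roots, rather than the paper's $\alpha=\rho_2$.
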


\begin{proof}
Consider the following Lyapunov function candidate $V(x) := V_1(\tilde{x})+V_2(\eta,\tau,l)+V_3(\sigma,\tau,l)$ for the hybrid system \eqref{eq:systemHy} defined for every $x \in \mathbb{R}^{n_x}$. We prove (i) first. By setting
$\rho_1 = \min\{\alpha_1,\beta_1,\theta_1\}$, $\rho_2 = \max\{\alpha_2,\beta_2,\theta_2\}$ and in view of the definition of the set $\mathcal{A}$ in (\ref{eq:setA}) one gets 
\begin{equation} \label{eq:proofTheoremLyapBounds}
\scalebox{0.9}{
$
\rho_1 |x|_\mathcal{A}^2 \leq V(x) \leq \rho_2 |x|_\mathcal{A}^2 \qquad \forall x \in C \, \cup \,  D
$}
\end{equation}
Moreover, from Assumption~\ref{ass:lyapunov} item (A\ref{ass1:Vdot}) one has that $\forall x \in C,\omega \in \mathbb{R}^{n_\omega}$
\begin{equation} \label{eq:proofTheoremLyap}
\scalebox{0.9}{
$
\langle \nabla V(x), f(x,\omega) \rangle \leq  -2\lambda_t V(x) + \gamma^2 \omega^\top \omega
$}
\end{equation}
and from Assumption~\ref{ass:lyapunov} items (A\ref{ass1:jump1}) and (A\ref{ass1:jump2}), one has that for all $x \in D$
\begin{equation} \label{eq:proofJumps}
\scalebox{0.9}{
$
V(g(x))\leq V(x)
$}
\end{equation}
Let $(\phi,\omega)$ be a maximal solution pair to \eqref{eq:systemHy}. 
Following the same steps as in \cite[proof of Theorem 1]{ ferrante2015hybrid}, using \eqref{eq:proofTheoremLyapBounds}, \eqref{eq:proofTheoremLyap} and \eqref{eq:proofJumps}, for all $(t,j)\in \textrm{dom}\phi$ one has 
$$\scalebox{0.9}{
$
|\phi(t,j)|_\mathcal{A} \leq \textrm{max} \left\lbrace2\sqrt{\frac{\rho_2}{\rho_1}}e^{-\lambda_t t} |\phi(0,0)|_\mathcal{A} , \frac{2\gamma}{\sqrt{2 \lambda_t \rho_1}} \Vert\omega\Vert_\infty \right\rbrace
$}$$
This shows that \eqref{eq:defeISS} holds with $\kappa = 2\sqrt{\rho_2/\rho_1}$, $\lambda = \lambda_t$ and $r \mapsto p(r) := (2\gamma/\sqrt{2 \lambda_t \rho_1}) r$. Hence, since every maximal solution pair to $\mathcal{H}_y$ is complete, ($i$) is established.
To conclude, let $(\phi, \omega)$ be a maximal solution pair to
$\mathcal{H}_y$ and pick $t>0$. Again, by using the same approach as in \cite[proof of Theorem~1]{ ferrante2015hybrid}, thanks to Assumption~\ref{ass:lyapunov} items (A\ref{ass1:jump1}) and (A\ref{ass1:jump2}), since $V$ is nonincreasing at jumps,  one gets {\small $\int_{\mathcal{I}(t)} \tilde{x}(r,j(r))^\top C_o^\top C_o \tilde{x}(r,j(r)) dr  \leq  V(\phi(0,0)) + \gamma^2 \int_{\mathcal{I}(t)} |\omega(r,j(r))|^2 dr$} where ${\mathcal{I}(t)} := [0,t] \cap \textrm{dom}_t \phi$.
Therefore, by taking the limit for $t$ approaching $\textrm{sup}_t \, \textrm{dom}\phi$, thanks to \eqref{eq:proofTheoremLyapBounds}, one gets ($ii$) with $\alpha=\rho_2$. Hence, the result is established.
\end{proof}

\subsection{Construction of the Lyapunov Function}
To determine the values of $\Delta$ and $T_{mad}$, one needs to explicitely identify functions $V_1$, $V_2$ and $V_3$ in Assumption~\ref{ass:lyapunov}. Let $P_1 \in \mathcal{S}^{n_{\tilde{x}}}_+$, $P_{2,l}:= (1-l)P_{2,0}+l P_{2,1}, \, P_{3,l}:= (1-l)P_{3,0}+l P_{3,1}$ with $P_{2,0}$, $P_{2,1}$, $P_{3,0}$, $P_{3,1} \in \mathcal{S}^{n_y}_+$, $l\in\{0,1\}$, and $\delta$ be positive real number. Inspired by \cite{ferrante2015hybrid} we operate the following selection:
\begin{equation} \label{eq:lypFunctions}
\scalebox{0.9}{
$
\begin{aligned}
V_1(\tilde{x}) = & \tilde{x}^\top P_1 \tilde{x}, \quad V_2(\eta,\tau,l) = e^{-\delta \tau} \eta^\top P_{2,l} \eta, \\
& V_3(\sigma,\tau,l) = e^{-\delta \tau} \sigma^\top P_{3,l} \sigma \\ 
\end{aligned}
$}
\end{equation}
By exploiting the (quasi)-quadratic nature of the Lyapunov function candidate $x \mapsto V_1(\tilde{x})+ V_2(\eta,\tau,l)+V_3(\sigma,\tau,l)$, such a choice for $V_1$, $V_2$, and $V_3$ allows us to cast the solution to Problem \ref{prob:stability} as a solution to some matrix inequalities.

\begin{theorem} \label{th:practical}
If there exist  $P_1 \in \mathcal{S}^{n_{\tilde{x}}}_+$, $P_{2,0},\,P_{2,1},\,P_{3,0}$, and $P_{3,1} \in \mathcal{S}^{n_y}_+$ with
\begin{subequations} \label{eq:Pconds}
\begin{align} 
\scalebox{0.9}{$P_{2,1} - e^{-\delta (\Delta+1)T_s} P_{2,0} \leq 0 \label{eq:P2cond}$} \\
\scalebox{0.9}{$P_{2,0} + P_{3,0} - P_{3,1} \leq 0 \label{eq:P3cond}$} 
\end{align}
\end{subequations}
such that
\begin{equation}
\label{eq:Mconditions}
\scalebox{0.9}{
$
\begin{array}{cc}
\mathcal{M}(0,1)<0&\mathcal{M}(T_{mad},1)<0\\
\mathcal{M}(T_{mad},0)<0& 
\mathcal{M}((\Delta+1)T_s,0)<0
\end{array}
$}
\end{equation}
where for each $\tau \in [0,(\Delta+1)T_s]$ and $l \in \{0,1\}$ the function $\mathcal{M}\colon(\tau, l)\mapsto \mathcal{M}(\tau,l)$ is defined in \eqref{LMI} (at the top of the next page), then the Assumption~\ref{ass:lyapunov} holds.
\end{theorem}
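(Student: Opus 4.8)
The plan is to verify the six items of Assumption~\ref{ass:lyapunov} one at a time for the explicit choice \eqref{eq:lypFunctions}: items (A\ref{ass:lyapunov:A1})--(A\ref{ass:lyapunov:A3}) will hold automatically from positive definiteness and the boundedness of the timer, items (A\ref{ass1:jump1})--(A\ref{ass1:jump2}) will collapse to the matrix conditions \eqref{eq:P2cond}--\eqref{eq:P3cond}, and item (A\ref{ass1:Vdot}) will be shown equivalent to the finite family of LMIs \eqref{eq:Mconditions}.

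For (A\ref{ass:lyapunov:A1})--(A\ref{ass:lyapunov:A3}) I would read the bounds off the spectra of the constant positive definite matrices. Since $V_1(\tilde x)=\tilde x^\top P_1\tilde x$ does not depend on $\tau$ or $l$, item (A\ref{ass:lyapunov:A1}) holds with $\alpha_1=\lambda_{\min}(P_1)$ and $\alpha_2=\lambda_{\max}(P_1)$. For $V_2$ and $V_3$ the only extra ingredient is that on $C$ one has $\tau\in[0,(\Delta+1)T_s]$, so the scalar weight $e^{-\delta\tau}$ lies in $[e^{-\delta(\Delta+1)T_s},1]$; this yields (A\ref{ass:lyapunov:A2}) with $\beta_1=e^{-\delta(\Delta+1)T_s}\min_{l}\lambda_{\min}(P_{2,l})$ and $\beta_2=\max_{l}\lambda_{\max}(P_{2,l})$, and analogously (A\ref{ass:lyapunov:A3}), all constants being strictly positive. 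For (A\ref{ass1:jump1}) and (A\ref{ass1:jump2}) I would substitute \eqref{eq:lypFunctions} and eliminate the free variables and the timer by worst-casing. In (A\ref{ass1:jump1}) the term $V_3(0,0,1)$ vanishes, the right-hand side is minimized over the free $\sigma$ at $\sigma=0$ and over $\tau\in T_s\Theta_\Delta$ at $\tau=(\Delta+1)T_s$, so the inequality reduces exactly to $P_{2,1}\le e^{-\delta(\Delta+1)T_s}P_{2,0}$, i.e. \eqref{eq:P2cond}; in (A\ref{ass1:jump2}) the common factor $e^{-\delta\tau}$ cancels and the worst case $\eta=0$ turns the inequality into $P_{2,0}+P_{3,0}\le P_{3,1}$, i.e. \eqref{eq:P3cond}.

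The crux is (A\ref{ass1:Vdot}). Here I would differentiate $V$ along $f$, using $\dot\tau=1$, $\dot l=0$ and the fact that $\eta$ and $\sigma$ obey the same flow, and gather every term into a single quadratic form $\zeta^\top\mathcal{M}(\tau,l)\zeta\le 0$ in the stacked vector $\zeta:=(\tilde x,\eta,\sigma,\omega)$, the terms $-2\lambda_t V$ and the $-\delta e^{-\delta\tau}$ parts of $\partial_\tau V_2$ and $\partial_\tau V_3$ supplying the decay margin. The key structural observation is that $\mathcal{M}(\tau,l)$ depends on $\tau$ only \emph{affinely} through the scalar $e^{-\delta\tau}$, so that $\mathcal{M}(\tau,l)=\mathcal{M}_0+e^{-\delta\tau}\mathcal{M}_1(l)$ with $\mathcal{M}_0$ carrying the $V_1$, $C_o^\top C_o$ and $\gamma^2$ blocks and $\mathcal{M}_1(l)$ carrying the $P_{2,l},P_{3,l}$ blocks. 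Since $e^{-\delta\tau}$ is a monotone scalar, on any $\tau$-interval every $\mathcal{M}(\tau,l)$ is a convex combination of its two endpoint values; because the set of negative definite matrices is convex, $\mathcal{M}(\tau,l)<0$ throughout the interval follows from negative definiteness at the endpoints alone. Applying this to $l=1$ on $[0,T_{mad}]$ and to $l=0$ on $[0,(\Delta+1)T_s]$ produces the vertex LMIs collected in \eqref{eq:Mconditions}, which establishes (A\ref{ass1:Vdot}) and hence the theorem.

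I expect the main obstacle to lie in (A\ref{ass1:Vdot}): correctly assembling $\mathcal{M}(\tau,l)$ --- in particular tracking which blocks carry the weight $e^{-\delta\tau}$ and which do not, and how $P_{2,l}$ and $P_{3,l}$ enter through $A_{\eta x},A_{\eta\eta},A_{\eta\omega}$ --- and then checking that the finitely many vertices genuinely cover the whole flow set $C$. The delicate point is the overlap region $\tau\le T_{mad}$, where both $l=0$ and $l=1$ belong to $C$; one must argue that the endpoints retained in \eqref{eq:Mconditions}, together with the coupling \eqref{eq:Pconds} between the $l=0$ and $l=1$ weights, suffice to certify $\mathcal{M}(\tau,l)<0$ on all of $C$.
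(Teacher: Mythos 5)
Your handling of items (A\ref{ass:lyapunov:A1})--(A\ref{ass1:jump2}) coincides with the paper's proof (the paper uses the slightly sharper per-mode constants $\beta_1=\min\{\lambda_{\min}(P_{2,0})e^{-\delta(\Delta+1)T_s},\lambda_{\min}(P_{2,1})e^{-\delta T_{mad}}\}$, but your cruder ones are equally valid, and the worst-casing over $\sigma$, $\eta$, $\tau$ that reduces (A\ref{ass1:jump1})--(A\ref{ass1:jump2}) to \eqref{eq:P2cond}--\eqref{eq:P3cond} is identical), and your quadratic-form-plus-convexity strategy for (A\ref{ass1:Vdot}) is also the paper's. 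The genuine gap is in the convexity step for $l=0$, and it is exactly the point you flag in your last paragraph and never discharge. Since the jump map does not reset the timer at update events ($\tau^+=l\tau=\tau$ when $l=1$), solutions genuinely flow with $l=0$ and $\tau\in[0,T_{mad})$, so (A\ref{ass1:Vdot}) requires $\mathcal{M}(\tau,0)<0$ on the \emph{whole} interval $[0,(\Delta+1)T_s]$. The vertices of that interval under the affine map $\tau\mapsto \mathcal{M}_0+e^{-\delta\tau}\mathcal{M}_1(0)$ are $\mathcal{M}(0,0)$ and $\mathcal{M}((\Delta+1)T_s,0)$; but \eqref{eq:Mconditions} contains $\mathcal{M}(T_{mad},0)<0$, not $\mathcal{M}(0,0)<0$, so the two available vertices only certify $\tau\in[T_{mad},(\Delta+1)T_s]$. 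Negative definiteness of an affine matrix function at two points does not extrapolate beyond the segment joining them (already false for scalars: $a+sb$ can be negative at $s_1>s_2$ and positive at $s=1>s_1$), and the coupling \eqref{eq:Pconds} cannot supply the missing vertex: $P_{2,l},P_{3,l}$ enter $\mathcal{M}$ through sign-indefinite products with $A_{\eta x},A_{\eta\eta},A_{\eta\omega}$, so the orderings $P_{2,1}\leq e^{-\delta(\Delta+1)T_s}P_{2,0}$ and $P_{2,0}+P_{3,0}\leq P_{3,1}$ induce no ordering between $\mathcal{M}(\cdot,0)$ and $\mathcal{M}(\cdot,1)$. Hence, as written, your argument does not establish (A\ref{ass1:Vdot}) on the portion of $C$ with $l=0$, $\tau<T_{mad}$; saying "one must argue that the retained endpoints suffice" names the obstruction but is not a proof.

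It is worth knowing that the paper's own proof is loose at precisely the same spot: it proves the convex-combination decomposition for $l=1$ on $[0,T_{mad}]$ and for $l=0$ only on $[T_{mad},(\Delta+1)T_s]$, and then asserts $\mathcal{M}(\tau,l)<0$ for all $(\tau,l)\in[0,(\Delta+1)T_s]\times\{0,1\}$ without covering $l=0$, $\tau\in[0,T_{mad})$. So your instinct about where the difficulty sits is correct; the honest repair, for both your proof and the paper's, is to require $\mathcal{M}(0,0)<0$ in place of $\mathcal{M}(T_{mad},0)<0$ in \eqref{eq:Mconditions}: then $\mathcal{M}(T_{mad},0)<0$ is recovered for free by convexity (because $e^{-\delta T_{mad}}$ lies between $e^{-\delta(\Delta+1)T_s}$ and $1$), and the vertices $\mathcal{M}(0,0)$, $\mathcal{M}((\Delta+1)T_s,0)$, $\mathcal{M}(0,1)$, $\mathcal{M}(T_{mad},1)$ cover all of $C$, after which the rest of your argument goes through. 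One further small inaccuracy: the matrix in \eqref{LMI} contains no $-2\lambda_t V$ term, so the decay rate cannot be "gathered into" the quadratic form as you describe; it must be extracted a posteriori from strict negativity, e.g.\ via $\max_{(\tau,l)}\lambda_{\max}(\mathcal{M}(\tau,l))<0$ over the compact parameter set together with the upper bound $V(x)\leq\rho_2|x|^2_{\mathcal{A}}$, which is how the paper produces $\lambda_t$.
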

\begin{figure*}[!t]
\vspace*{5pt}
% ensure that we have normalsize text
\normalsize
% Store the current equation number.
%\setcounter{mytempeqncnt}{\value{equation}}
% Set the equation number to one less than the one
% desired for the first equation here.
% The value here will have to changed if equations
% are added or removed prior to the place these
% equations are referenced in the main text.
\setcounter{equation}{14}
\begin{small}
\begin{equation}
\label{LMI} 
\scalebox{0.9}{
$\mathcal{M}(\tau,l)= \left( 
\begin{array}{cccc}
\textrm{He}(P_1 A_{xx})+ C_o^\top C_o &  P_1 A_{x\eta} + e^{-\delta \tau} A_{\eta x}^\top P_{2,l} & e^{-\delta \tau} A_{\eta x}^\top P_{3,l} & P_1 A_{x\omega}\\
\bullet & e^{-\delta \tau} \left(\He(P_{2,l} A_{\eta\eta})- \delta  P_{2,l} \right) & e^{-\delta \tau} A_{\eta \eta}^\top P_{3,l}  & e^{-\delta \tau} P_{2,l} A_{\eta \omega} \\
\bullet & \bullet  &  -\delta e^{-\delta \tau} P_{3,l}  & e^{-\delta \tau} P_{3,l} A_{\eta \omega} \\
\bullet & \bullet  &  \bullet & -\gamma^2 \textbf{I}\\
\end{array}
\right)$}
\end{equation}
\end{small}
% Restore the current equation number.
%\setcounter{equation}{\value{mytempeqncnt}}
\setcounter{equation}{15}
% IEEE uses as a separator
%\hrulefill
% The spacer can be tweaked to stop underfull vboxes.
\vspace*{-20pt}
\end{figure*}
%\allowbreak
\begin{proof}
Let $V_1$, $V_2$ and $V_3$ be defined as in (\ref{eq:lypFunctions}). By selecting 
$\scalebox{0.9}{$\alpha_1=\lambda_{\min}(P_1), \, \alpha_2=\lambda_{\max}(P_1), \, \beta_1=\textrm{min}\{ \lambda_{\min}(P_{2,0})e^{-\delta (\Delta+1)T_s},$}$
$\scalebox{0.9}{$\lambda_{\min}(P_{2,1})e^{-\delta T_{mad}} \}$}$, 
$\scalebox{0.9}{$\beta_2=\textrm{max}\left\lbrace \lambda_{\max}(P_{2,0}), \lambda_{\max}(P_{2,1}) \right\rbrace,$}$ 
$\scalebox{0.9}{$\theta_1=\textrm{min}\left\lbrace \lambda_{\min}(P_{3,0})e^{-\delta (\Delta+1)T_s}, \lambda_{\min}(P_{3,1})e^{-\delta T_{mad}} \right\rbrace$}$, and
$\scalebox{0.9}{$\theta_2=\textrm{max}\left\lbrace \lambda_{\max}(P_{3,0}), \lambda_{\max}(P_{3,1}) \right\rbrace$}$,
items (A\ref{ass:lyapunov:A1}), (A\ref{ass:lyapunov:A2}) and (A\ref{ass:lyapunov:A3}) of the Assumption~\ref{ass:lyapunov} are satisfied. By using \eqref{eq:P2cond}, and \eqref{eq:P3cond} one can show that items (A\ref{ass1:jump1}) and (A\ref{ass1:jump2}) hold. Regarding item (A\ref{ass1:jump1}), one has that, for all $x\in D$ with $l=0$,
$ 
V_2(\eta,0,1) + V_3(0,0,1) - V_2(\eta,\tau,0) - V_3(\sigma,\tau,0) \leq \eta^\top \left(P_{2,1}  - e^{-\delta \tau}P_{2,0}   \right) \eta 
$.
Notice that, since $\delta>0$, for all $\tau \in  T_s \Theta_\Delta$,
$
P_{2,1} - e^{-\delta \tau}P_{2,0} \leq P_{2,1} - e^{-\delta (\Delta+1)T_s} P_{2,0}   
$.
Therefore, the satisfaction of \eqref{eq:P2cond} implies $P_{2,1} - e^{-\delta \tau}P_{2,0} \leq 0$ for all $\tau \in  T_s \Theta_\Delta$, which shows that (A\ref{ass1:jump1}) holds.
For item (A\ref{ass1:jump2}), one has that for all $x \in D$ with $l=1$ due to \eqref{eq:P3cond}, it follows that
$
V_2(\sigma,\tau,0) + V_3(\sigma,\tau,0) - V_2(\eta,\tau,1) - V_3(\sigma,\tau,1) \leq e^{-\delta \tau} \sigma^\top \left( P_{2,0} + P_{3,0} - P_{3,1} \right) \sigma \leq 0
$,
which shows that (A\ref{ass1:jump2}) holds because of \eqref{eq:P3cond}.
Regarding item (A\ref{ass1:Vdot}) of Assumption~\ref{ass:lyapunov}, let $V(x)=V_1(\tilde{x})+V_2(\eta,\tau,l)+V_3(\sigma,\tau,l)$. Then, from the definition of the flow map in \eqref{eq:flowMapHy}, for each $x\in C$, $\omega \in \mathbb{R}^{n_\omega}$ one can define $\Omega(x,\omega)\coloneqq\langle \nabla V(x), f(x,\omega) \rangle+\tilde{x}^\top C_o^\top C_o \tilde{x} - \gamma^2 \omega^\top \omega$. Therefore, by defining $\Psi(x,\omega):=(\tilde{x},\eta,\sigma,\omega)$, for each $x \in C$ and $\omega \in \mathbb{R}^{n_\omega}$, one has $\Omega(x,\omega) = \Psi(x,\omega)^\top \mathcal{M}(\tau,l) \Psi(x,\omega)$ where the symmetric matrix $\mathcal{M}$ is given in \eqref{LMI}. Furthermore, notice that it is straightforward to show that there exists $\hat{\lambda}: [0,\tau] \mapsto [0,1]$ such that for each $\tau \in [0,T_{mad}]$, $\mathcal{M}(\tau,1)=\hat{\lambda}(\tau)\mathcal{M}(0,1)+(1-\hat{\lambda}(\tau))\mathcal{M}(T_{mad},1)$ and there exists $\bar{\lambda}: [0,\tau] \mapsto [0,1]$ such that for each $\tau \in [T_{mad},(\Delta+1)T_s]$, $\mathcal{M}(\tau,0)=\bar{\lambda}(\tau)\mathcal{M}(T_{mad},0)+(1-\bar{\lambda}(\tau))\mathcal{M}((\Delta+1)T_s,0)$; see \cite{ferrante2015hybrid}. Therefore, one has that the satisfaction of \eqref{eq:Mconditions} implies $\mathcal{M}(\tau,l)<0, \, \forall(\tau, l)\in [0,(\Delta+1)T_s]\times\{0,1\}$, which lead to $\overline{\varsigma}\coloneqq \max_{(\tau, l)\in[0,(\Delta+1)T_s]\times\{0,1\}}\lambda_{\max}(\mathcal{M}(\tau,l))<0$. Observe that the above quantity is well defined, $(\tau,l)\mapsto \mathcal{M}(\tau,l)$ being continuous on $[0,(\Delta+1)T_s]\times\{0,1\}$. Therefore, one has that for all $x\in C,\omega\in\mathbb{R}^{n_\omega}$, $\Omega(x,\omega)\leq -\overline{\varsigma} \tilde{x}^\top \tilde{x}=-\overline{\varsigma}\vert x\vert^2_\mathcal{A} $.
Defining $\rho_2 = \max\{\alpha_2,\beta_2,\theta_2\}$, using \eqref{eq:proofTheoremLyapBounds} and the definition of $\Omega$, one finally gets, for all $x\in C,\omega\in\mathbb{R}^{n_\omega}$, $\langle \nabla V(x), f(x,\omega) \rangle\leq -\frac{\overline{\varsigma}}{\rho_2}V(x)-\tilde{x}^\top C_o^\top C_o \tilde{x}+\gamma^2 \omega^\top \omega$ which reads as (A\ref{ass1:Vdot}). Hence, the result is established.
\end{proof}

\begin{remark} \label{rem:simplyProb}
Conditions \eqref{eq:Pconds} and \eqref{eq:Mconditions} with $\mathcal{M}(\tau,l)$ without the forth row and the forth column, and $C_o=\textbf{0}$ are sufficient conditions to solve Problem \ref{prob:stability} when only the internal exponential stability ($\omega\equiv 0$) is considered. 
\end{remark}

\section{LMI-Based Algorithm for Trade-off Curves}
In the previous section, we provided sufficient conditions for the solution to Problem~\ref{prob:stability} in the form of matrix inequalities. The objective of the current section is to make use of the proposed sufficient conditions to include some optimization aspects in the solution to Problem~\ref{prob:stability}. In particular, as the primary goal of this letter is to provide estimates of the largest allowable number of consecutive jammed transmissions and communication delay, next we illustrate an algorithm for the approximation of the trade-off curve between this two objectives. Specifically, to accomplish this goal, we make use of conditions \eqref{eq:Pconds} and \eqref{eq:Mconditions} to formulate the following optimization problem:

\begin{equation} \label{eq:optProb}
\scalebox{0.9}{
$
\begin{array}{cl}
\underset{P_1,P_{2,0},P_{2,1},P_{3,0},P_{3,1},\delta,\gamma}{\maximize}
& (\Delta, T_{mad})\\
\text{subject to}
& \eqref{eq:Pconds}, \eqref{eq:Mconditions}, T_{mad} \in [0, T_s] \\
\end{array}
$
}
\end{equation}
where the above maximization is  intended in a Pareto sense \cite{boyd2004convex}. In particular, we provide a systematic approach to build an approximation of the trade-off curve of the above multi-objective optimization problem. Notice that we assume the value of $\gamma$ is given and strictly positive.
To obtain a numerically efficient solution to \eqref{eq:optProb}, we make use of semidefinite programming tools. Specifically, observe that when $\delta$, $\Delta$, and $T_{mad}$ are fixed, conditions \eqref{eq:Pconds} and \eqref{eq:Mconditions} turn into LMIs, which can be efficiently solved via available semidefinite programming solvers \cite{boyd1994linear}. Therefore, checking the feasibility of \eqref{eq:Pconds} and \eqref{eq:Mconditions} can be used in a numerical scheme by performing line search for the scalars $\delta$, $\Delta$, and $T_{mad}$. 

Algorithm 1 describes our approach to solve the optimization problem \eqref{eq:optProb}. The algorithm gives as output the vectors $\vec{\Delta}$ and $\vec{T}_{mad}$ that provide an estimation of the trade-off curve between $\Delta$ and $T_{mad}$. Indeed, the trade-off curve identifies the maximum allowable delay for each given value of consecutive jammed transmissions $\Delta$. In particular, given the value of $\Delta$, one can identify the maximum allowable delay that can affect the last received packet without compromising the stability of the NCS.

\begin{algorithm}[H] 
\caption{Trade-off curve approximation}
\begin{algorithmic}[1] 
\State Initialize $\Delta\gets 0$, $stop \gets 0$, $\vec{\Delta}$, $\vec{T}_{mad}$;
\While{$stop=0$}
	\State maximize $T_d$ on $[0, T_s]$ subject to \eqref{eq:Pconds}-\eqref{eq:Mconditions} by performing line search on $\delta$ for given value of $\Delta$.  
	\If{$T_d$ is found}
		\State $\vec{\Delta}[\Delta] \gets \Delta$, $\vec{T}_{mad}[\Delta] \gets T_d$, $\Delta \gets \Delta+1$
	\Else 
	 	\State $stop \gets 1$
	\EndIf
\EndWhile
\State \Return $\vec{T}_{mad}$, $\vec{\Delta}$
\end{algorithmic}
\end{algorithm}

\begin{remark} 
Let us remark that $T_s$ is a given parameter in the optimization problem \eqref{eq:optProb}. The choice of $T_s$ influences the outcome of Algorithm~1. Indeed, as the sampling time decreases, the closed-loop control system tolerates a larger value of $\Delta$. This is due to the fact that our approach aims at estimating the largest interval $(1+\Delta)T_s$ in which the system can evolve without network updates. Furthermore, notice that $T_{mad}$ is also influenced by $T_s$ since the assumption $0 \leq T_{mad} \leq T_s$. 
\end{remark}

\section{Numerical Example}
In this section, we show-case Algorithm 1 in a specific example.
%\footnote{All numerical results are obtained through the solver \textit{SEDUMI} \cite{sturm1999using} and coded in Matlab$^{\tiny{\textregistered}}$ via \textit{YALMIP} \cite{lofberg2004yalmip}.}. 
All numerical results are obtained through the solver \textit{SEDUMI} \cite{sturm1999using} and coded in Matlab$^{\tiny{\textregistered}}$ via \textit{YALMIP} \cite{lofberg2004yalmip}.
We consider the well known batch reactor controlled by a dynamic output feedback controller, presented, e.g., in \cite{nesic2004input,heemels2010networked} and many others. Numerical values of the plant and controller can be found in \cite{nesic2004input} while 
$
\scalebox{0.9}{
$
C_o \!\! =\!\! \left[ \!\! \begin{array}{cccc}
C_p &\!\!\! \textbf{0}\\
\end{array} \!\!\right]
$}
$ 
and 
\scalebox{0.8}{
$W\!\! =\!\! \left[\!\! \begin{array}{cccc}
10 &\!\!\! 0 &\!\!\! 10 &\!\!\! 0 \\
0 &\!\!\! 5 &\!\!\! 0 &\!\!\! 5
\end{array}\!\! \right]^\top \!\!\! $}
in \cite{heemels2010networked}. By employing Algorithm 1, we estimate the trade-off curves for the considered NCS in two different scenarios: zero-input stability (see Remark \ref{rem:simplyProb}), and input-output stability for $\gamma \leq 5$. By selecting $T_s=0.01\, s$, the obtained trade-off curves are depicted in Fig. \ref{fig:Tradeoff}. It emerges that both zero-input and input-output stability are guaranteed up to  two consecutive packet dropouts. However, in case of input-output stability, the NCS allows a smaller value for $T_{mad}$ in the case of two consecutive packet dropouts. This is noticeable from the lower blue bar for $\Delta = 2$ in Fig. \ref{fig:Tradeoff}.

\begin{figure}[thpb]
\centering
\psfrag{x}[][][1]{$\Delta$}
\psfrag{y}[bc][][1]{$T_{mad} \, [s]$}
\includegraphics[scale=0.48]{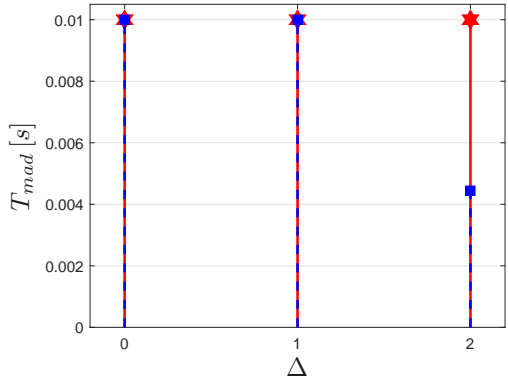}
\caption{Trade-off curves between $\Delta$ and $T_{mad}$ computed for the batch reactor by using Algorithm 1. In red trade-off curves in case of zero-input stability, while in blue trade-off curves in case of input-output stability with $\gamma \leq 5$.}
\label{fig:Tradeoff}
\end{figure}

%\vspace*{-10pt}

\section{Conclusion}
In this letter, we presented a methodology to study stability of NCSs affected by varying transmission delays and DoS attacks. Relying on Lyapunov results for hybrid systems, an LMI based approach was devised to find explicit bounds for MANSD and MAD under which stability of a given NCS is preserved. The proposed approach was then applied to the well-known batch reactor. Trade-off curves between the above objectives were obtained by relying on the proposed algorithm. 

The extension of the methodology proposed in this letter to account for less conservative DoS models is currently part of our research.
\balance
\bibliographystyle{IEEEtran}
\bibliography{references}
\end{document}